\newcommand{\cout}[1]{}
\newtheorem{theorem}{Theorem}[section]
\newtheorem{lemma}[theorem]{Lemma}
\newtheorem{observation}[theorem]{Observation}
\def\squarebox#1{\hbox to #1{\hfill\vbox to #1{\vfill}}}
\newcommand{\qed}{\hspace*{\fill}
	\vbox{\hrule\hbox{\vrule\squarebox{.667em}\vrule}\hrule}\smallskip}
\newenvironment{proof}{\noindent{\bf Proof:~~}}{\(\qed\)}
\newcommand{\FS}{{\mathcal{F}^H_s}}
\newcommand{\FB}{{\mathcal{F}^H_b}}
\newcommand{\ERS}{{\mathcal{ER}^H_s}}
\newcommand{\ERB}{{\mathcal{ER}^H_b}}
\newcommand{\HH}{{\mathcal{H}}}
\newcommand{\comment}[1]{}
\begin{document}
	\title{A Note on the Gains from Trade of the Random-Offerer Mechanism}
		\author{Moshe Babaioff \thanks{Microsoft Research. Email: moshe@microsoft.com.} \and Shahar Dobzinski\thanks{Weizmann Institute of Science and Microsoft Research. Email: shahar.dobzinski@weizmann.ac.il. Partially supported by BSF grant 2016192, ISF grant 2185/19, and an NSF-BSF grant 2021655.} \and Ron Kupfer\thanks{Harvard University. Work done while an intern at Microsoft Research.  Email: ron.kupfer@mail.huji.ac.il.}	}
	\maketitle	
	\begin{abstract} 
We study the classic bilateral trade setting. Myerson and Satterthwaite show that there is no Bayesian incentive compatible and budget-balanced mechanism that obtains the gains from trade of the first-best mechanism. Consider the random-offerer mechanism: with probability $\frac 1 2$ run the \emph{seller-offering} mechanism, in which the seller offers the buyer a take-it-or-leave-it price that maximizes the expected profit of the seller, and with probability $\frac 1 2$ run the \emph{buyer-offering} mechanism. Very recently, Deng, Mao, Sivan, and Wang showed that the gains from trade of the random-offerer mechanism is at least a constant factor of $\frac 1 {8.23}\approx 0.121$ of the gains from trade of the first best mechanism. Perhaps a natural conjecture is that the gains-from-trade of the random-offerer mechanism, which is known to be at least half of the gains-from-trade of the second-best mechanism, is also at least half of the gains-from-trade of the first-best mechanism. However, in this note we exhibit distributions such as the gains-from trade of the random-offerer mechanism is smaller than a $0.495$-fraction of the gains-from-trade of the first-best mechanism.
\end{abstract}
	
\section{Introduction}

Consider a bilateral trade setting with a seller that holds one indivisible item and a buyer that is interested in buying that item. The value of the seller for keeping the item is $s\geq 0$ (and zero if he gives the item) and the value of the buyer if he obtains the item is $b\geq 0$ (and zero otherwise). The values $s$  and $b$ are drawn independently from distributions $\FS$ and $\FB$, respectively. The distributions $\FS$ and $\FB$ are known to both players, yet values are private.  

The optimal gains from trade (GFT) measures the maximum possible expected welfare improvement in the market due to trade. That is, the \emph{optimal GFT} (or the GFT of the \emph{first best} mechanism) is obtained by a trade happening if and only if $b\geq s$, it expected GFT is denoted by $OPT_{\FS, \FB}=E_{s\sim \FS, b\sim \FB}(b-s)_+=E_{s\sim \FS, b\sim \FB}[\max \{b-s,0\}]$. 

Myerson and Satterthwaite \cite{myerson1983efficient} consider distributions $\FS,\FB$ that are supported on $[\underline s,\overline s]$ and $[\underline b,\overline b]$, respectively, and have  continuous and positive density everywhere in their domains. 
They prove that if $\underline b<\overline s$ then any Bayesian incentive compatible and interim individually rational mechanism that is also ex-ante budget balanced (i.e., the expected payment that the buyer pays is at least the expected amount that the seller receives) never guarantees full efficiency (optimal GFT). They provide a formula for computing the mechanism that maximizes the GFT among all incentive compatible, individually rational and budget-balanced mechanisms (the \emph{second-best} mechanism). 

Providing a formula for the second-best mechanism is obviously a remarkable achievement. However, this formula includes computing integrals that are often hard to analytically compute. Thus, typically, for a given pair of distributions $\FS,\FB$ it is practically impossible to implement the second-best mechanism. Even when they can be computed, second-best mechanisms are often unintuitive and take a complex form. 


\subsection*{Some Related Work}
Soon after the result of \cite{myerson1983efficient}, economists have realized the deficiencies of second-best mechanisms and the search for alternatives has begun almost immediately. For example, Chatterjee and Samuelson \cite{chatterjee1983bargaining} offer the following \emph{k-double auction}: the seller submits a bid $r_s$, the buyer submits $r_b$. If $r_b<r_s$ the sell price is set to be $k\cdot r_b+(1-k)r_s$. For the case of $k\in(0,1)$, Satterthwaite and Williams \cite{SATTERTHWAITE1989107} and Leininger, Linhart and Radner \cite{leininger1989equilibria} show that the auction (which is not incentive compatible) may exhibit a continuum of equilibria. The efficiency of these equilibria vary significantly: some of these mechanisms might guarantee no efficiency at all, whereas others might guarantee the efficiency of second-best mechanisms.

In addition, Leininger, Linhard, and Radner \cite{leininger1989equilibria} consider several specific distributions and compute the ratio between the optimal GFT and the GFT obtained by the second-best mechanism for these distributions. They exhibit a specific pair of distributions in which this ratio is $\frac 2 e\approx 0.735$. 

McAfee \cite{mcafee2008gains} takes a different approach. Instead of struggling with the hard task of explicitly computing the second-best mechanism, McAfee suggests the Median mechanism: let $p$ be the median of the seller distribution $\FS$. Trade occurs if and only if $b\geq p \geq s$. McAfee shows that the median mechanism guarantees in expectation at least half of the optimal GFT, but only under the constraint that the median of $\FS$ is at most the median of $\FB$. Blumrosen and Dobzinski \cite{blumrosen2014reallocation} prove that in a sense this constraint is necessary, since there exists a pair of distributions for which no distribution-dependent price can guarantee a constant fraction of the optimal GFT.

Blumrosen and Mizrahi \cite{blumrosen2016approximating} study a different class of mechanisms.  Specifically, they introduce the \emph{seller-offering} mechanism in the context of bilateral trade. In this mechanism, a seller with value $s$ makes a take-it-or-leave-it offer $p$ for a price $p$ that maximizes his profit: $\Pr_{b\sim \FB}[b\geq p]\cdot (p-s)$. The buyer agrees only if $b\geq p$. Blumrosen and Mizrahi show that if $\FB$ is a distribution with a monotone hazard rate, the GFT is $1-\frac 1 e$ of the optimal GFT. 
Interestingly, the seller-offering mechanism can be seen as a $k$-double auction with $k=0$.

Brustle et al. \cite{brustle2017approximating} extend this mechanism and consider the \emph{random-offerer} mechanism: with probability half  it runs the seller-offering mechanism and with probability half it runs the 
analogous buyer-offering mechanism (a $k$-double auction with $k=1$): 
a buyer with value $b$ makes a take-it-or-leave-it offer $p$ for the price $p$ that maximizes his profit: $\Pr_{s\sim \FS}[s\leq p]\cdot (b-p)$. Analyzing the dual of a certain LP problem, they provide, among other results, an involved proof that shows that the GFT of this mechanism extracts is half of the GFT of the second-best mechanism.\footnote{Segal \cite{Segal} offered an alternative proof for this result that is short enough to fit in a footnote: the GFT of any BIC, IR and strongly budget-balanced mechanism is simply the sum of the profits of both players. Since the seller-offering mechanism provides the highest possible profit for the seller among all incentive compatible mechanism, and since similarly the buyer-offering mechanism provides the highest possible profit for the buyer, the sum of profits of both mechanisms is at least the GFT of any BIC, IR and strongly-budget mechanism, and the result follows.}

\subsection*{Our Result}

Only very recently, Deng, Mao, Sivan, and Wang \cite{deng2021approximately} were able to show that the GFT of the second-best mechanism is a constant factor of the gains from trade of the first-best mechanism. Specifically, they show that the random-offerer mechanism always guarantees a constant factor of $\frac 1 {8.23}$ of the gains from trade of the first-best mechanism.

In this note we further study the performance of the random-offerer mechanism. Given the result of Brustle et al. \cite{brustle2017approximating}, one could perhaps conjecture that the gains-from-trade of the random-offerer mechanism is always at least half of the gains from trade of the first-best mechanism, not just of that of the second-best mechanism. However, we show that:

\begin{theorem}\label{thm-main}
There exists a distribution of the seller $\FS$ and a distribution of the buyer $\FB$ such that the gains from trade of the random-offerer mechanism is less than $0.495$-fraction of the gains from trade of the first-best mechanism.
\end{theorem}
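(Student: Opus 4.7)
The plan is to exhibit an explicit pair of distributions $\FS, \FB$ together with a direct computation of the three relevant quantities: first-best GFT, seller-offering GFT, and buyer-offering GFT. Since the random-offerer mechanism's GFT is the average of the latter two, it suffices to find distributions on which the sum of the seller-offering and buyer-offering GFTs is strictly less than $0.99$ times the first-best GFT. Equivalently, a realized pair $(s,b)$ with $b \geq s$ contributes $b-s$ to the first-best but zero to the random-offerer precisely when the seller-offering price $p_S(s) = \arg\max_p (p-s)\Pr_b[b\ge p]$ exceeds $b$ \emph{and} the buyer-offering price $p_B(b) = \arg\max_p (b-p)\Pr_s[s\le p]$ falls below $s$, so the construction must make a substantial fraction of first-best GFT come from such doubly-missed pairs.

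This leads naturally to a construction with two complementary features: the buyer's distribution should carry enough mass at a high outlier value so that every seller type's revenue-maximizing take-it-or-leave-it offer is pulled up near the top of the support; and the seller's distribution should carry enough mass at a low value so that every buyer type's surplus-maximizing offer is pulled down near the bottom. Under both features simultaneously, profitable trades between ``middle'' seller types and ``middle'' buyer types are missed by both mechanisms, while the first-best GFT remains dominated by exactly these middle trades. A natural candidate family is: seller with a heavy atom near the bottom of its support plus a small equal-revenue-like mass on $[1, H]$; buyer with a small atom at a high outlier $M \gg H$ plus a complementary equal-revenue-like mass on $[1, H]$; the scale $H$, the outlier value $M$, and the two atom weights are the tunable parameters. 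The equal-revenue structure is precisely what forces the seller-offering mechanism to post at the extreme top of the support (since $p \cdot \Pr[b\ge p]$ is flat), and the atom at the bottom of $\FS$ is what forces the buyer-offering mechanism to post near zero.

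After writing closed-form expressions for the three GFTs as functions of the parameters, I would numerically search over the parameter space to find a setting where the ratio dips below $0.495$ and then present that setting with a clean analytic verification. The main technical obstacle is that distributions with only two support points on each side are insufficient: a case analysis shows that in every such case one of the two offering mechanisms captures essentially the entire first-best, so their sum always exceeds first-best and the random-offerer always attains at least half of the first-best. At least three support points (or a continuous component) on at least one of the two sides is needed to carve out genuine middle trades missed by both mechanisms. A secondary subtlety is handling ties in the argmax defining $p_S$ and $p_B$; perturbing the support points by arbitrarily small amounts breaks the ties in the worst direction for the random-offerer, which is what makes the strict inequality $< 0.495$ achievable rather than merely $\le 1/2$.
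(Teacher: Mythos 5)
Your construction is exactly the one the paper uses: a seller distribution with an atom near $0$ plus an equal-revenue tail (so every buyer's optimal offer is pulled to the bottom of the support) paired with the mirror-image buyer distribution (so every seller's optimal offer is pulled to the top), together with the tie-breaking perturbation. The paper's $\FS,\FB$ on $\{0,1,\dots,H\}$ are precisely ``discrete equal-revenue plus a shifted atom,'' and your observation that the equal-revenue structure flattens $p\cdot\Pr[b\ge p]$ and hence forces extreme posted prices is the correct driving mechanism. (One small deviation: no outlier $M\gg H$ is needed; the buyer's top value can simply be $H$.)

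However, what you have written is a plan, not a proof, and the part you defer --- ``write closed-form expressions \ldots then numerically search \ldots then present a clean analytic verification'' --- is where essentially all of the paper's work lies. The gap is not cosmetic, because the quantitative claim is delicate: for this family the ratio of the random-offerer GFT to the first-best GFT exceeds $\tfrac12$ for small $H$, dips below $\tfrac12$ only for large $H$ (the paper shows $H>120$ suffices for $<\tfrac12$), and converges to $\tfrac12$ from below as $H\to\infty$. So the achievable ratio is only barely under $\tfrac12$, and establishing that it actually reaches $0.495$ requires the exact evaluations $OPT_{\FS,\FB}=\frac{2\HH_H}{H+1}-\frac{H-2}{H(H-1)}$ and $BO_{\FS,\FB}=SO_{\FS,\FB}=\frac{\HH_{H-1}}{H-1}-\frac{7}{12(H-1)}+\frac{1}{2(H-1)^2}$ (harmonic-sum computations occupying most of Section~2), followed by plugging in a specific $H$ such as $H=1000$. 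Without carrying out those computations, your proposal does not establish that any member of the family beats $0.495$ rather than, say, $0.4999$; the assertion that the numerical search will succeed is exactly the statement to be proved.
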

We note that both distributions are finite and have discrete support (extending our result to the continuous case is straightforward).




\subsection*{Future Directions}

One basic open question is to determine how large can be the ratio between the GFT of the first-best mechanism and the GFT of the second-price mechanism. There are distributions for which this ratio is at most $\frac 2 e$ \cite{blumrosen2016approximating,leininger1989equilibria}. Determining the exact fraction of the optimal GFT that second-best mechanisms can always guarantee is an important open question.

However, the second-best mechanism is often quite complex, whereas the random-offerer mechanism is natural and simple. Thus, it is very interesting to quantify the additional power that the complexity of the second-best mechanism gives: how large can the ratio between the GFT of the first-best mechanism and the GFT of the random-offerer mechanism be? 

It is also interesting to understand the power of seller/buyer-offering mechanisms with respect to the second-best mechanism. We know of some distributions for which the GFT of the random-offerer mechanism is no more than half of the GFT of the second-best mechanism.\footnote{Consider the distribution in which the value of the seller is identically $0$, whereas the buyer's value is distributed by a distribution $\mathcal D$ that is equal to the equal revenue distribution $\mathcal D'$ in the interval $[1,H)$ and $\Pr_{b\sim \mathcal D[b=H}=\Pr_{b\sim \mathcal D'[b \geq H]}$. The buyer-offering mechanism GFT equal the first best whereas the GFT of the seller-offering mechanism (which always offers $H$) approaches $0$ as $H$ goes to infinity. Note that the GFT of the buyer-offering mechanism -- which always makes an offer of $0$ -- is optimal for these distributions.} However, for these distributions, the ``best of'' mechanism, which runs the mechanism with the higher GFT of the two offering mechanisms, is the second-best mechanism (note that in the distributions used in the proof of Theorem \ref{thm-main} the GFTs of the seller-offering mechanism and the buyer-offering mechanism are identical, therefore the performance of the random-offerer mechanism and the ``best of'' mechanism is the same). In Section \ref{sec-best-of} we observe 
that there are distributions for which the ``best of'' mechanism provides no more than $\frac 3 4$ of the GFT of the second-best mechanism. Thus, there is a gap between this bound of $\frac 3 4$ and the guarantee of $\frac 1 2$ ensured by the random-offerer (and the ``best of'') mechanism.
	\section{The Hard Distributions (Proof of Theorem \ref{thm-main})}\label{sec-main}

We consider distributions whose support contains only in the set of integers $\{0, 1,\ldots, H\}$ for $H>3$. The seller's value distribution $\FS$ and the buyer's distribution are:
$$
\Pr_{s\sim \FS}(s\leq m)=
\begin{cases}
\frac 1 {2(H-1)} & m=0, \\
\frac 1{ H-m} & m \in \{1,2,\ldots,H-1\}, \\
1 & m=H.
\end{cases}
\quad \quad
\Pr_{b\sim \FB}(b\geq m)=
\begin{cases}
\frac 1 {2(H-1)} & m=H, \\
\frac 1 m & m \in \{1,2,\ldots,H-1\}, \\
1 & m=0.
\end{cases}
$$


In the proof we assume that when an agent makes an offer, we can determine the way ties are broken. A specific tie-breaking rule can be iplemented by shifting an arbitrarily small probability mass from one tied price to another. The effect of such a shift on the GFT is negligible. 

We make the following simple observations. 
A seller with a value $s=0$ is indifferent offering any integer price in $\{1, 2, \ldots, H-1\}$. For $s\in [1,H-3]$ the seller offers a price $H-1$, which is the unique profit maximizing price.
For $s=H-2$, the seller is indifferent between offering $H-1$ or $H$, so similarly we assume it offers $H$. For $s=H-1$  the seller's offer is $H$.

We note that the distributions of the buyer and the seller are symmetric (the roles of 0 and $H$ are switched). Thus, we assume that a buyer with a value $b=H$ offers $1$.
For $b\in \{2,\ldots, H-1\}$ the buyer offers a price $1$.
For $b=2$, the buyer is indifferent between offering $1$ or $0$ and we assume it offers $0$.
For $b=1$ the seller offers a price $0$.


We denote the GFT of the buyer-offering mechanism by $BO_{\FS, \FB}$, and  the GFT of the seller-offering mechanism by $SO_{\FS, \FB}$. The GFT of the random-offerer mechanism is $(BO_{\FS, \FB}+SO_{\FS, \FB})/2$.
To prove the claim we compute $OPT_{\FS, \FB}$ as well as $BO_{\FS, \FB}$ and $SO_{\FS, \FB}$. 
To do so, we first compute the optimal GFT for two very close distributions.

\subsection{Discrete Equal Revenue Distributions}\label{sec:ER}

The distributions $\FB,\FS$ are very similar to two distributions that are easier to analyze. We first analyze the optimal GFT and the GFT of the buyer-offering and seller-offering mechanisms for there distributions, and then compute the GFT of these mechanisms on the distributions $\FB$ and $\FS$ by examining the difference between the distributions. The two Discrete Equal Revenue Distributions $\ERS$ and $\ERB$ are defined as follows: 
$$
\Pr_{s\sim \ERS}(s\leq m)=
\begin{cases}
\frac 1{ H-m} & m \in \{0, 1,\ldots,H-1\}, \\
1 & m=H.
\end{cases}
\quad \quad
\Pr_{b\sim \ERB}(b\geq m)=
\begin{cases}
1 & m=0,\\
\frac 1 m & m \in  \{1, 2, \ldots,H\}. \\
\end{cases}
$$
The buyer distribution $\ERB$ has support $X=\{1,2,\ldots,H\}$ for $H\geq 2$. For $x\in X$, the probability that $b\geq x$ is $1/x$. 
Note that the probability that $b=H$ is $1/H$, and for any $b\in X\setminus \{H\}$, the probability that $b=x$ is $\frac{1}{x(x+1)}$.

The seller distribution $\ERS$ has support $Y=\{0,1,\ldots,H-1\}$. For $y\in Y$, the probability that $s\leq y$ is $1/(H-y)$. Note that the probability that $s=0$ is $1/H$, and for any $s\in Y\setminus \{0\}$, the probability that $s=y$ is $\frac{1}{(H-y)(H-y+1)}$. In the rest of this section we prove the following claim:
\begin{lemma}\label{lem:opt-ERs}
    It holds that $OPT_{\ERS, \ERB}=\frac{ 2\cdot \HH_{H}}{H+1} $. 
\end{lemma}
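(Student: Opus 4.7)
My plan is to use a telescoping/layer-cake identity that expresses $(b-s)_+$ as a sum of threshold indicators, then apply independence and exploit a partial-fraction identity that reveals the harmonic sum.

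First I would write, for integer-valued $b,s$,
\[
(b-s)_+ \;=\; \sum_{p \in \mathbb{Z}} \mathbf{1}[s \le p < b] \;=\; \sum_{p=0}^{H-1} \mathbf{1}[s \le p]\,\mathbf{1}[b \ge p+1],
\]
where the range $0 \le p \le H-1$ suffices because $s \ge 0$ and $b \le H$. Taking expectations and using independence of $s \sim \ERS$ and $b \sim \ERB$ yields
\[
OPT_{\ERS,\ERB} \;=\; \sum_{p=0}^{H-1} \Pr[s \le p]\cdot \Pr[b \ge p+1].
\]

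Next I would substitute the given CDF/tail formulas. Since $\Pr[s\le p]=\frac{1}{H-p}$ for $p\in\{0,\ldots,H-1\}$ and $\Pr[b\ge p+1]=\frac{1}{p+1}$ for $p+1\in\{1,\ldots,H\}$, the sum becomes $\sum_{p=0}^{H-1}\frac{1}{(H-p)(p+1)}$. Re-indexing by $k=p+1$ turns this into $\sum_{k=1}^{H}\frac{1}{k(H+1-k)}$.

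Finally I would apply the partial-fraction identity
\[
\frac{1}{k(H+1-k)} \;=\; \frac{1}{H+1}\left(\frac{1}{k}+\frac{1}{H+1-k}\right),
\]
and note that the substitution $j=H+1-k$ sends the second summand's sum to $\sum_{j=1}^{H}\frac{1}{j}=\mathcal{H}_H$, so both halves contribute $\mathcal{H}_H$ and we obtain $\frac{2\mathcal{H}_H}{H+1}$, as claimed.

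I do not anticipate a major obstacle; the main conceptual step is recognizing that the layer-cake identity combined with the equal-revenue tail probabilities $\Pr[s\le p]\cdot \Pr[b\ge p+1]=\frac{1}{(H-p)(p+1)}$ yields a symmetric reciprocal sum, and the rest is a routine partial-fraction computation. The only care needed is bookkeeping at the endpoints of the support (that $b$ cannot be $0$ and $s$ cannot be $H$), which restricts $p$ to $\{0,1,\ldots,H-1\}$ and keeps both tail formulas within their valid ranges.
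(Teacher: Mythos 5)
Your proof is correct, and it takes a genuinely different and considerably shorter route than the paper's. The paper computes $E[(b-s)_+]$ by brute force: it splits into the cases $s=0$, $b=H$, and the interior, writes the interior contribution as a double sum over the point masses $\frac{1}{b(b+1)}\cdot\frac{1}{(H-s)(H-s+1)}$, and then grinds through several harmonic-sum identities (including an Abel-summation-style telescoping of $\HH_s\bigl(\frac{1}{H-s}-\frac{1}{H-s+1}\bigr)$ and the same partial-fraction trick you use, but buried two layers deep). You instead apply the layer-cake identity $(b-s)_+=\sum_{p=0}^{H-1}\mathbf{1}[s\le p]\,\mathbf{1}[b\ge p+1]$ up front, which by independence reduces everything to the single sum $\sum_{p=0}^{H-1}\Pr[s\le p]\Pr[b\ge p+1]=\sum_{k=1}^{H}\frac{1}{k(H+1-k)}$, and one partial-fraction step finishes it. The layer-cake approach buys a lot here precisely because the equal-revenue distributions are defined by their CDF and tail, so the threshold probabilities are the natural objects; it also makes transparent why the answer is symmetric in the two distributions. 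The endpoint bookkeeping you flag ($s\ge 0$, $b\le H$, so $p$ ranges over $\{0,\ldots,H-1\}$ and both formulas stay in their valid ranges) is exactly right. The only thing your write-up loses relative to the paper's is the intermediate expression for $E_{s>0,\,b<H}[(b-s)_+]$, which the paper does not reuse elsewhere, so nothing downstream is affected.
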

\begin{proof}
Denote $OPT_{\ERS, \ERB}$ by OPT. Observe that:
\begin{align*}
OPT &= H\cdot \Pr[s=0]\cdot \Pr[b=H] + E_{b\sim \ERB, b<H}[b-0]\cdot \Pr[s=0] + E_{s\sim \ERS, s>0}[H-s]\cdot \Pr[b=H] \\
& + E_{s\sim \ERS,b\sim \ERB, s> 0, b< H} [(b-s)_+] \\
&= \frac{1}{H} + \frac{2 ({\HH}_{H}-1)}{H}  + E_{s\sim \ERS,b\sim \ERB, s> 0, b<H} [(b-s)_+]\\ 
&=  \frac{2\cdot {\HH}_{H}-1}{H}  + E_{s\sim \ERS,b\sim \ERB, s> 0, b< H} [(b-s)_+]
\end{align*}
\begin{align*}
E_{s\sim \ERS,b\sim \ERB, s> 0, b< H} [(b-s)_+] &=
\sum_{s=1}^{H-1} \sum_{b=s+1}^{H-1} (b-s)\cdot \frac{1}{b(b+1)}\cdot \frac{1}{(H-s)(H-s+1)} \\
&= \sum_{s=1}^{H-1}  \frac{1}{(H-s)(H-s+1)} \sum_{b=s+1}^{H-1} (b-s)\cdot \frac{1}{b(b+1)} 
\end{align*}
Observe that 
\begin{align*}
\sum_{b=s+1}^{H-1} (b-s)\cdot \frac{1}{b(b+1)} &=
\sum_{b=s+1}^{H-1} \frac{1}{b+1} - s \sum_{b=s+1}^{H-1} \left(\frac{1}{b}- \frac{1}{b+1}\right)\\
&=\HH_H - \HH_{s+1} - s(\HH_{H-1} - \HH_{s} -(\HH_H - \HH_{s+1}))\\
&= \HH_H - \HH_{s+1} -\frac{s}{s+1} +\frac{s}{H} \\ 
&= \HH_H - \HH_{s+1} - 1 +\frac{1}{s+1} +\frac{s}{H} \\
&= \HH_H - \HH_{s} -\frac{H-s}{H} \\
\end{align*}
thus
\begin{align*}
E_{s\sim \ERS,b\sim \ERB, s> 0, b< H} [(b-s)_+] &=
\sum_{s=1}^{H-1}  \frac{\HH_H - \HH_{s} -\frac{H-s}{H}}{(H-s)(H-s+1)} \\
&= \sum_{s=1}^{H-1}  \frac{\HH_H - \HH_{s}}{(H-s)(H-s+1)} -
\sum_{s=1}^{H-1}  \frac{1}{H(H-s+1)} \\
&= \sum_{s=1}^{H-1}  (\HH_H - \HH_{s}) \left(\frac{1}{H-s}- \frac{1}{H-s+1}\right) -
\frac{1}{H} \cdot (\HH_H-1) \\
&= \HH_H\cdot \sum_{s=1}^{H-1} \left(\frac{1}{H-s}- \frac{1}{H-s+1}\right) 
-\sum_{s=1}^{H-1}  \HH_{s} \left(\frac{1}{H-s}-\frac{1}{H-s+1} \right)
- \frac{\HH_H-1}{H} \\
&= \HH_H\left(1-\frac{1}{H}\right) 
-
\sum_{s=1}^{H-1}  \HH_{s} \left(\frac{1}{H-s} -\frac{1}{H-s+1} \right)
- \frac{\HH_H-1}{H} \\
\end{align*}
\begin{align*}
\sum_{s=1}^{H-1}  \HH_{s} \left(\frac{1}{H-s}- \frac{1}{H-s+1} \right) &=
\sum_{s=1}^{H-1}   \frac{\HH_{s}}{H-s}- 
\sum_{s=1}^{H-1}   \frac{\HH_{s}}{H-s+1}  \\
&= \HH_{H-1}+ \sum_{s=1}^{H-2}   \frac{\HH_{s}}{H-s}- 
\sum_{s=2}^{H-1}   \frac{\HH_{s}}{H-s+1} -\frac{1}{H} \\
&= \HH_{H-1}-\frac{1}{H} + \sum_{s=1}^{H-2}   \frac{\HH_{s}}{H-s}-  \sum_{s=1}^{H-2}   \frac{\HH_{s+1}}{H-s}  \\
&= \HH_{H-1}-\frac{1}{H}-\sum_{s=1}^{H-2}   \frac{1}{s+1}\cdot \frac{1}{H-s}  \\
&=\HH_{H-1}-\frac{1}{H} - \frac{1}{H+1}\sum_{s=1}^{H-2}  \left( \frac{1}{s+1}+ \frac{1}{H-s} \right) \\
&=\HH_{H-1}-\frac{1}{H} - \frac{2(\HH_{H-1} -1)}{H+1} 
\end{align*}
Combining the two we get:
\begin{align*}
E_{s\sim \ERS,b\sim \ERB, s> 0, b< H} [(b-s)_+]
&= \HH_H\left(1-\frac{1}{H}\right) 
-
\sum_{s=1}^{H-1}  \HH_{s} \left(\frac{1}{H-s} -\frac{1}{H-s+1} \right)
- \frac{\HH_H-1}{H} \\
&= \HH_H-\frac{\HH_H}{H} 
-
\left(\HH_{H-1}-\frac{1}{H} - \frac{2(\HH_{H-1} -1)}{H+1}  \right)
- \frac{\HH_H}{H} +\frac{1}{H} \\
&= \HH_H-\frac{2\HH_H}{H} 
-\HH_{H-1}+\frac{1}{H} + \frac{2(\HH_{H-1} -1)}{H+1} 
+\frac{1}{H} \\
&= \frac{3}{H} -\frac{2\HH_H}{H}  + \frac{2(\HH_{H-1} -1)}{H+1} 
\\
\end{align*}
We thus conclude that 
\begin{align*}
OPT  
&=  \frac{2\cdot {\HH}_{H}-1}{H}  + E_{s\sim \ERS,b\sim \ERB, s> 0, b< H} [(b-s)_+]\\
&=  \frac{2\cdot {\HH}_{H}-1}{H}  + \frac{3}{H} -\frac{2\HH_H}{H}  + \frac{2(\HH_{H-1} -1)}{H+1}  \\
&=  \frac{2}{H}  + \frac{2(\HH_{H-1} -1)}{H+1}  \\
&=  \frac{2H+2 + 2 H \HH_{H-1} -2 H}{H(H+1)}  \\
&=  \frac{ 2 (H \HH_{H-1}+1)}{H(H+1)}  \\
&=  \frac{ 2 H \cdot\HH_{H}}{H(H+1)}  \\
&=  \frac{ 2\cdot \HH_{H}}{H+1}  \\
\end{align*}
\end{proof}

\subsection{Back to $\FS$ and $\FB$}

To prove the theorem we use the above bound to prove two claims:

\begin{lemma}\label{lem:opt}
    It holds that $OPT_{\FS, \FB}= OPT_{\ERS, \ERB}-\frac{H-2}{H(H-1)}=\frac{ 2\cdot \HH_{H}}{H+1} -\frac{H-2}{H(H-1)}$.
\end{lemma}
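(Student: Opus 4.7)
My plan is to view $\FS$ and $\FB$ as small local perturbations of $\ERS$ and $\ERB$, and compute the change in expected gains from trade via a product-measure expansion.

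First, I would identify the perturbations precisely by computing point masses. Comparing the CDFs shows that $\FS$ is obtained from $\ERS$ by moving exactly $\Delta := \frac{H-2}{2H(H-1)}$ units of probability mass from the atom at $s=0$ to the atom at $s=1$, while the probabilities on $\{2,\ldots,H-1\}$ are unchanged. Symmetrically, $\FB$ is obtained from $\ERB$ by moving $\Delta$ units of mass from $b=H$ to $b=H-1$. Define the signed measures $\Delta_s$ with $\Delta_s(0)=-\Delta,\ \Delta_s(1)=+\Delta$, and $\Delta_b$ with $\Delta_b(H)=-\Delta,\ \Delta_b(H-1)=+\Delta$ (zero elsewhere); then $\Pr_{\FS}(s)=\Pr_{\ERS}(s)+\Delta_s(s)$ and $\Pr_{\FB}(b)=\Pr_{\ERB}(b)+\Delta_b(b)$.

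Next I would expand the product of the two perturbed marginals and use linearity of expectation to write
\begin{align*}
OPT_{\FS,\FB} - OPT_{\ERS,\ERB}
&= \sum_{s,b}\!\Big[\Pr_{\ERS}(s)\,\Delta_b(b)+\Delta_s(s)\,\Pr_{\ERB}(b)+\Delta_s(s)\,\Delta_b(b)\Big](b-s)_{+},
\end{align*}
and I would evaluate the three terms in turn. For the first term, the buyer-side factor is nonzero only at $b\in\{H-1,H\}$, and for every $s\in\{0,\ldots,H-1\}$ (the support of $\ERS$) one checks $(H-1-s)_{+}-(H-s)_{+}=-1$, so the term collapses to $-\Delta$. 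The second term is handled symmetrically: the seller-side factor is nonzero only at $s\in\{0,1\}$, and for every $b\in\{1,\ldots,H\}$ one has $(b-1)_{+}-b_{+}=-1$, giving $-\Delta$. The third (interaction) term is a finite sum over the four pairs $(s,b)\in\{0,1\}\times\{H-1,H\}$, and evaluates to
\[
\Delta^{2}\bigl[\,-(H-1)+H+(H-2)-(H-1)\,\bigr]=0,
\]
so the second-order correction vanishes.

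Adding the three contributions yields $OPT_{\FS,\FB}-OPT_{\ERS,\ERB}=-2\Delta=-\frac{H-2}{H(H-1)}$, and plugging in the value of $OPT_{\ERS,\ERB}$ from Lemma \ref{lem:opt-ERs} gives the claimed formula. I do not expect any real obstacle here: the main care is in bookkeeping the signs of the shifts and verifying that the four-term interaction sum is exactly zero, which is what makes the final answer depend linearly (rather than quadratically) on $\Delta$.
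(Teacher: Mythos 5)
Your proposal is correct and follows essentially the same route as the paper: both identify $\FS,\FB$ as the result of shifting mass $\Delta=\frac{H-2}{2H(H-1)}$ from $0$ to $1$ (seller) and from $H$ to $H-1$ (buyer), and both use the fact that each unit of shifted mass changes $(b-s)_+$ by exactly $-1$ across the entire support of the other distribution. The only difference is presentational: the paper performs the two shifts sequentially (which absorbs the second-order term automatically, since the per-unit loss is $-1$ regardless of the other marginal), whereas you expand the product measure and verify explicitly that the four-term interaction sum vanishes — a slightly more careful bookkeeping of the same computation.
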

\begin{proof}
    By Lemma \ref{lem:opt-ERs} it holds that $OPT_{\ERS, \ERB}=\frac{ 2\cdot \HH_{H}}{H+1} $. Observe that $\FB$ is the distribution  obtained by starting with  $\ERB$ and moving probability  mass of size 
    $\frac{1}{H}-\frac{1}{2(H-1)}$ from $H$ to $H-1$. Similarly, $\FS$ is the distribution  obtained by starting with  $\ERS$ and moving probability  mass of size
    $\frac{1}{H}-\frac{1}{2(H-1)}$ from $0$ to $1$.
    Note that in $OPT_{\ERS, \ERB}$, the GFT includes the trade of every seller with a buyer of value $H$, and moving probability mass of  $\frac{1}{H}-\frac{1}{2(H-1)}$ from $H$ to $H-1$ decreases the GFT by exactly  $\frac{1}{H}-\frac{1}{2(H-1)}$. Similarly, the GFT decreases by the same amount due to the move in the seller mass. The claim follow as the total GFT loss is $2(\frac{1}{H}-\frac{1}{2(H-1)})= \frac{H-2}{H(H-1)}$.
\end{proof}
\begin{lemma}\label{lem:bo}
   For $H>2$ it holds that {$BO_{\FS, \FB}=SO_{\FS, \FB}=\frac{\HH_{H-1}}{H-1} -     \frac{7}{12(H-1)} + \frac{1}{2(H-1)^2}$}.
\end{lemma}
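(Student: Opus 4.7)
The plan is to first use the symmetry of $\FS$ and $\FB$ to reduce to a single computation, and then to evaluate $BO_{\FS,\FB}$ directly by summing the GFT contributions from each buyer type. The distributions $\FS$ and $\FB$ are mirror images under the involution $x \mapsto H - x$ (in the sense that $H-s$ under $\FS$ is distributed as $\FB$ and symmetrically for the buyer), and this involution sends the seller-offering mechanism to the buyer-offering mechanism while preserving $b - s$. Hence $BO_{\FS,\FB} = SO_{\FS,\FB}$ and it suffices to compute $BO_{\FS,\FB}$.

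The paper has already catalogued the buyer's optimal offers: types $b \in \{1, 2\}$ offer $p = 0$, and types $b \in \{3, \ldots, H\}$ offer $p = 1$. For $p = 0$, trade requires $s = 0$, which has probability $\frac{1}{2(H-1)}$ and yields GFT $b$. For $p = 1$, trade requires $s \le 1$, which has probability $\frac{1}{H-1}$; using $\Pr[s=0] = \Pr[s=1] = \frac{1}{2(H-1)}$, we have $E[s \mid s \le 1] = \frac{1}{2}$, so the expected GFT given trade is $b - \frac{1}{2}$. I will therefore split $BO_{\FS,\FB}$ into three pieces: the small contributors $b \in \{1,2\}$ (with $\Pr[b=1]=\tfrac12$, $\Pr[b=2]=\tfrac16$); the ``middle'' contributors $b \in \{3, \ldots, H-2\}$ (with $\Pr[b]=\tfrac{1}{b(b+1)}$); and the ``boundary'' contributors $\{H-1, H\}$ (each with probability $\tfrac{1}{2(H-1)}$).

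The small piece evaluates to $\frac{1}{4(H-1)} + \frac{1}{6(H-1)}$ directly. The two boundary contributions combine very cleanly: their sum is $\frac{1}{2(H-1)^2}\bigl((H-\tfrac{3}{2}) + (H-\tfrac{1}{2})\bigr) = \frac{1}{H-1}$. For the middle piece, I will use the partial fraction $\frac{b - 1/2}{b(b+1)} = \frac{3}{2(b+1)} - \frac{1}{2b}$ to split the sum into two truncated harmonic tails, then telescope using $\HH_{H-2} = \HH_{H-1} - \frac{1}{H-1}$ together with $\HH_2 = \frac{3}{2}$ and $\HH_3 = \frac{11}{6}$, which I expect to rewrite it as $\frac{1}{H-1}\bigl(\HH_{H-1} + \frac{1}{2(H-1)} - 2\bigr)$. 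Adding the three pieces gives
\[
BO_{\FS,\FB} = \frac{\HH_{H-1}}{H-1} + \frac{1}{2(H-1)^2} + \frac{17}{12(H-1)} - \frac{2}{H-1},
\]
and the two purely-$\frac{1}{H-1}$ terms combine to the stated $-\frac{7}{12(H-1)}$.

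The main obstacle is bookkeeping in the middle sum: the telescoping produces boundary contributions from $\HH_2$ and $\HH_3$ that must be paired exactly against the small and boundary terms to land on $-\frac{7}{12}$. The most tempting mistake to avoid will be to treat $b = H-1$ using the formula $\frac{1}{b(b+1)}$ valid only for $b \le H-2$; since $\Pr[b=H-1] = \frac{1}{2(H-1)}$ (a boundary atom), I will keep $b = H-1$ together with $b = H$ throughout and never absorb it into the middle sum. As a sanity check on signs and constants I will verify the formula at $H = 4$ (where the middle range is empty) and $H = 5$ (one middle term), both of which should agree with the closed form.
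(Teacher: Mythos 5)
Your proposal is correct and follows essentially the same route as the paper: both use the symmetry $x\mapsto H-x$ to reduce to the buyer-offering case, take the buyer's offers ($0$ for $b\le 2$, $1$ for $b\ge 3$) as already established, and evaluate the resulting sum by telescoping the harmonic terms, with your $\frac{1}{H-1}(b-\frac12)$ per-type contribution being identical to the paper's $\frac{1}{2(H-1)}(2b-1)$. The bookkeeping (small, middle, and boundary pieces summing to $\frac{5}{12(H-1)}+\frac{1}{H-1}(\HH_{H-1}+\frac{1}{2(H-1)}-2)+\frac{1}{H-1}$) checks out and yields the stated formula.
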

\begin{proof}
    Clearly, by symmetry, $BO_{\FS, \FB}=SO_{\FS, \FB}$. We thus compute $BO_{\FS, \FB}$. 
    Any buyer with $b>2$ offers price of $1$, buyers with value $b\in \{1,2\}$ offer $0$. 
    The GFT of the buyer-offering mechanism is thus:
\begin{align*}
    BO_{\FS, \FB} = & 
    E_{b>2,b\sim \FB} [b-1]\cdot \Pr[s=1] + 
    E_{b>2,b\sim \FB} [b]\cdot \Pr[s=0] + 
    \Pr[b=1,s=0]+2\Pr[b=2,s=0]
     \\ = & 
    E_{b>2,b\sim \FB} [2b-1]\cdot \frac{1}{2(H-1)} + 
    \left(\frac{1}{2}+ 2\cdot \frac{1}{6}\right) \frac{1}{2(H-1)}
    \\ = & 
    \frac{1}{2(H-1)} \left( E_{b>2,b\sim \FB} [2b-1] + \frac{5}{6} \right)
\end{align*}

We next compute $E_{b>2,b\sim \FB} [2b-1]$.
\begin{align*}
 E_{b>2,b\sim \FB} [2b-1] = &
 \sum_{b=3}^{H-2} (2b-1)\cdot \frac{1}{b(b+1)} + (2(H-1)-1)\cdot \frac{1}{2(H-1)} + (2H-1)\cdot \frac{1}{2(H-1)} \\ = &
  \sum_{b=3}^{H-2} 2b\cdot \frac{1}{b(b+1)} -  \sum_{b=3}^{H-2} \left(\frac{1}{b}- \frac{1}{b+1}\right)+ 2\\ = &
  2\sum_{b=3}^{H-2} \frac{1}{b+1} -  \left(\frac{1}{3}- \frac{1}{H-1}\right)+ 2
  \\ = &
  2\left(\HH_{H-1} -1 - \frac{1}{2} - \frac{1}{3}\right) -  \frac{1}{3}+ \frac{1}{H-1}+ 2
  \\ = &
  2\HH_{H-1} +\frac{1}{H-1}-2
\end{align*}
We conclude that
\begin{align*}
    BO_{\FS, \FB} = &  
    \frac{1}{2(H-1)} \left( E_{b>2,b\sim \FB} [2b-1] + \frac{5}{6} \right)
    \\ = &
    \frac{1}{2(H-1)} \left( 2\HH_{H-1} +\frac{1}{H-1}-2 + \frac{5}{6} \right)
       \\ = &
    \frac{\HH_{H-1}}{H-1} -     \frac{7}{12(H-1)} + \frac{1}{2(H-1)^2} 
\end{align*}

\end{proof}

We are now ready to complete the proof of Theorem \ref{thm-main}. The theorem will directly follow from the following corollary derived from the two lemmas above.

\begin{lemma}\label{lem:below-half}
For the distribution $\FS,\FB$ with $H=1000$, the ratio between the optimal GFT and the GFT of the random-offerer mechanism is smaller than $0.495$. That is: 
$$\frac{BO_{\FS, \FB}+SO_{\FS, \FB}}{2\cdot OPT_{\FS, \FB}}< {0.495}$$
\end{lemma}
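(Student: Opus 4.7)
The plan is to reduce the lemma to a finite numerical computation by substituting $H = 1000$ into the closed-form expressions from Lemmas~\ref{lem:opt} and~\ref{lem:bo}. By the symmetry noted in the proof of Lemma~\ref{lem:bo} we have $BO_{\FS,\FB} = SO_{\FS,\FB}$, so the target inequality simplifies to $BO_{\FS,\FB} < 0.495\cdot OPT_{\FS,\FB}$, which, using the two lemmas, is
\[
\frac{\HH_{H-1}}{H-1} - \frac{7}{12(H-1)} + \frac{1}{2(H-1)^2} \;<\; 0.495\left(\frac{2\HH_H}{H+1} - \frac{H-2}{H(H-1)}\right)
\]
at $H = 1000$. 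Using $\HH_{999} = \HH_{1000} - 1/1000$, this is a linear inequality in the single quantity $\HH_{1000}$, so the entire content of the lemma is to show that this inequality holds for the actual value of $\HH_{1000}$.

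The next step is to pin down $\HH_{1000}$ with enough precision to decide the sign of the slack. The cleanest route is the standard asymptotic expansion
\[
\HH_n \;=\; \ln n + \gamma + \frac{1}{2n} - \frac{1}{12n^2} + O(n^{-4}),
\]
which, with $\gamma \approx 0.5772156649$, gives $\HH_{1000} \approx 7.4854708606$; equivalently one may simply sum the first $1000$ reciprocals exactly. Plugging this value in, the left-hand side of the displayed inequality evaluates to approximately $6.90854\cdot 10^{-3}$ and the right-hand side to approximately $6.90871 \cdot 10^{-3}$, so the strict inequality holds with a slack of order $10^{-7}$.

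The main obstacle is precisely this tightness: because the ratio $BO_{\FS,\FB}/OPT_{\FS,\FB}$ at $H=1000$ is extremely close to $0.495$, even modest rounding of $\HH_{1000}$ or of the intermediate quantities can flip the sign of the difference. To make the argument fully rigorous rather than merely numerical, I would cross-multiply to clear all fractions, substitute $\HH_{999} = \HH_{1000} - 10^{-3}$, and then bound $\HH_{1000}$ between two explicit rationals accurate to, say, $10^{-10}$, using either a sharp error bound for the asymptotic expansion above or an exact rational partial sum. The resulting statement is a finite arithmetic inequality over the rationals, which can be verified mechanically; no further structural argument beyond Lemmas~\ref{lem:opt} and~\ref{lem:bo} is required.
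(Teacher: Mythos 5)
Your proposal is correct and follows essentially the same route as the paper: the paper's proof simply substitutes $H=1000$ into the formulas of Lemmas \ref{lem:opt} and \ref{lem:bo} and asserts the resulting numerical inequality. Your additional care about bounding $\HH_{1000}$ rationally is a reasonable (if slightly overstated) refinement — the ratio at $H=1000$ is about $0.494988$, so roughly six significant digits of $\HH_{1000}$ suffice — but it does not change the argument.
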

\begin{proof}
To prove the claim we calculate the ratio using the two expressions proven in 
Lemma \ref{lem:opt} and Lemma \ref{lem:bo}. 
By Lemma \ref{lem:opt} it holds that:
    $$OPT_{\FS, \FB}=\frac{ 2\cdot \HH_{H}}{H+1} -\frac{H-2}{H(H-1)}$$
By Lemma \ref {lem:bo} for $H>2$ 
    it holds that:
    $$BO_{\FS, \FB}=SO_{\FS, \FB}=\frac{\HH_{H-1}}{H-1} -     \frac{7}{12(H-1)} + \frac{1}{2(H-1)^2}$$  
Plugging in $H=1000$ we get that:
$$\frac{BO_{\FS, \FB}+SO_{\FS, \FB}}{2\cdot OPT_{\FS, \FB}}= \frac{\frac{\HH_{999}}{999} -     \frac{7}{12\cdot 999} + \frac{1}{2\cdot{999}^2}}{\frac{ 2\cdot \HH_{1000}}{1001} -\frac{998}{1000\cdot 999}} < 0.495$$
\end{proof}

In Figure \ref{fig:ratio} we plot the ratio for different values of $H$. The ratio is larger than $\frac 1 2$ for small values of $H$, then goes down to below $\frac 1 2$ (Lemma \ref{lem:below-half}). As $H$ goes to infinity the ratio
approaches $\frac 1 2$ from below, see Appendix \ref{app:obs-half}. 

\begin{figure}
    \centering
    \includegraphics{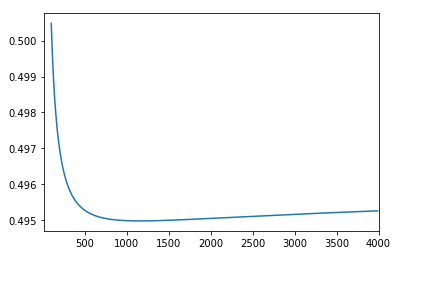}
    \caption{$\frac{BO_{\FS, \FB}+SO_{\FS, \FB}}{2\cdot OPT_{\FS, \FB}}$ as a function of $H$, for $H\in[100,4000]$.}
    \label{fig:ratio}
\end{figure}


\section{The ''Best-of'' Mechanism and the Second-Best Mechanism}\label{sec-best-of}

We consider the distributions $\ERB,\ERS$ of Section \ref{sec-main} with $H=2$. Observe that in a fixed-price mechanism that always offers a price of $1$, trade occurs whenever trade is profitable (when ties are broken in favour of trade), thus this mechanism extracts the optimal GFT. Therefore, for these distributions the GFT of the second-best mechanism and the GFT of the first-best mechanism are the same and equal $\frac {2\cdot (1+\frac 1 2)} 3-0=1$, by Lemma \ref{lem:opt-ERs}.

However, for these distributions, the GFT of each of  the seller-offering mechanism and the buyer-offering mechanism is $\frac 1 2 + (\frac 1 2)^2 = \frac{3}{4}$. We thus get that there are distributions for which  the GFT of the ``best-of'' mechanism  (the better of seller-offering and buyer-offering)  is at most $\frac 3 4$ of the GFT of the second-best mechanism. 
	\bibliography{main} 
	\bibliographystyle{plain}		
\appendix
\section{Convergence to $\frac 1 2$}\label{app:obs-half}

\begin{observation}\label{obs:half}
$\lim_{H\rightarrow\infty} \frac  {2\cdot (BO_{\FS, \FB}+SO_{\FS, \FB})}{OPT_{\FS, \FB}} = \frac{1}{2}$. In addition, for $H>120$, we have that $$\frac  {2\cdot (BO_{\FS, \FB}+SO_{\FS, \FB})}{OPT_{\FS, \FB}} < \frac{1}{2}$$
\end{observation}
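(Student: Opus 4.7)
The plan is to substitute the closed-form expressions from Lemma \ref{lem:opt} and Lemma \ref{lem:bo} into the ratio $\frac{2(BO_{\FS,\FB}+SO_{\FS,\FB})}{OPT_{\FS,\FB}}$ and analyze the resulting explicit expression as $H\to\infty$. By the symmetry noted in Lemma \ref{lem:bo}, $BO_{\FS,\FB}=SO_{\FS,\FB}$, so the numerator reduces to $4\,BO_{\FS,\FB}$, a closed form in $\HH_{H-1}$ and $H$. Combined with the closed form of $OPT_{\FS,\FB}$ from Lemma \ref{lem:opt}, the entire ratio becomes an explicit rational expression in harmonic numbers.

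For the limiting value, I would use the standard asymptotic $\HH_H=\ln H+\gamma+O(1/H)$ together with $\HH_{H-1}=\HH_H-1/H$. Both numerator and denominator acquire a leading-order term of the form $c\cdot(\ln H)/H$ with a subleading correction of size $O(1/H)$. The limit of the ratio is then the quotient of the two leading-order coefficients, which I would extract to confirm the claimed limiting value.

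For the strict inequality at all $H>120$, the plan is to work with the signed difference between the two sides of the inequality, obtained after clearing the harmonic-number denominators. A natural route is to establish monotonicity of the ratio in $H$ beyond some threshold --- either by inspecting consecutive differences or by differentiating its continuous extension with respect to $H$ --- and then anchor the conclusion with a direct numerical evaluation at $H=121$, in the same spirit as the $H=1000$ computation carried out in Lemma \ref{lem:below-half}. The main obstacle will be tight control of the $O(1/H)$ subleading corrections: since the leading $(\ln H)/H$ terms cancel in the relevant difference, the sign of the inequality is determined entirely by these lower-order terms, so a careful accounting of constants (and not merely leading asymptotics) is required.
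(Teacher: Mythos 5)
Your plan for the limit is fine: substituting the closed forms from Lemmas \ref{lem:opt} and \ref{lem:bo} and using $\HH_H=\ln H+\gamma+O(1/H)$ does give numerator $\sim \HH_H/H$ and denominator $\sim 2\HH_H/H$, hence limit $\tfrac12$ (the paper instead sandwiches the ratio between two explicit bounds that both tend to $\tfrac12$, but either route works).

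The gap is in the second half, the strict inequality for $H>120$. First, the step you would need --- monotonicity of the ratio in $H$ --- is precisely the hard, unproven part of your plan: it requires controlling differences (or a derivative of the digamma-type continuous extension) of a quotient of harmonic-number expressions, which is at least as delicate as the inequality itself, and you give no argument for it. Second, the way you propose to assemble the pieces does not close the argument: the ratio approaches $\tfrac12$ \emph{from below}, i.e.\ it is eventually \emph{increasing}, so ``monotonicity plus a numerical evaluation at $H=121$'' proves nothing about larger $H$ --- an increasing sequence that is below $\tfrac12$ at $H=121$ could still cross $\tfrac12$ later. What would close it is monotone increase for $H\ge 121$ \emph{combined with the limit being exactly} $\tfrac12$; the check at $H=121$ is not the anchor. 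The paper avoids monotonicity altogether: after rewriting the ratio as
$\frac{\HH_{H}-\frac{7}{12}-\frac{H-2}{2H(H-1)}}{\frac{H-1}{H+1}\cdot 2\HH_H-1+\frac{2}{H}}$,
it drops a negative term in the numerator, slightly shrinks the denominator (replacing $\tfrac2H$ by $\tfrac{2}{H+1}$), and obtains the explicit bound
$\left(1+\tfrac{2}{H-1}\right)\left(\tfrac12-\tfrac{1}{12(2\HH_H-1)}\right)
\le \tfrac12+\tfrac{1}{H-1}-\tfrac{1}{12(2\HH_H-1)}$,
which is below $\tfrac12$ as soon as $H-1>12(2\HH_H-1)$, i.e.\ for $H>120$. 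This is exactly the ``careful accounting of constants'' you flag as the main obstacle; your proposal identifies the obstacle but does not overcome it, so as written the inequality for all $H>120$ is not established.
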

\begin{proof}
\begin{align*}
\frac  {2\cdot (BO_{\FS, \FB}+SO_{\FS, \FB})}{OPT_{\FS, \FB}} =&
\frac{\frac{\HH_{H-1}}{H-1} -     \frac{7}{12(H-1)} + \frac{1}{2(H-1)^2}}{\frac{ 2\cdot \HH_{H}}{H+1} -\frac{H-2}{H(H-1)}}\\
=&\frac{\frac{1}{H-1}\left(\HH_{H}-\frac{1}{H}-\frac{7}{12}+\frac{1}{2(H-1)}\right)}{\frac{1}{H-1}\left(\frac{H-1}{H+1}\cdot2\HH_H-1+\frac{2}{H}\right)}\\
=&\frac{\HH_{H}-\frac{7}{12}-\frac{H-2}{2H(H-1)}}{\frac{H-1}{H+1}\cdot2\HH_H-1+\frac{2}{H}}\\
\leq&\frac{\HH_{H}-\frac{7}{12}}{\frac{H-1}{H+1}\cdot2\HH_H-1+\frac{2}{H+1}}\\
=&
\frac{\HH_H-\frac{7}{12}}{\left(\frac{H-1}{H+1}\right)(2\HH_{H}-1)}\\
=&
\left(\frac{H+1}{H-1}\right)\cdot\left(\frac{1}{2}-\frac{1}{12(2\HH_{H}-1)}\right)\\
=&
\left(1+\frac{2}{H-1}\right)\cdot\left(\frac{1}{2}-\frac{1}{12(2\HH_{H}-1)}\right)\\
=&
\frac{1}{2}+\frac{1}{H-1}-\frac{1}{12(2\HH_{H}-1)}-\frac{1}{12(H-1)(2\HH_{H}-1)}\\<&\frac{1}{2}+\frac{1}{H-1}-\frac{1}{12(2\HH_{H}-1)}<\frac{1}{2}
\end{align*}
where the first inequality holds for $H>2$ and the last holds for $H>120$.
In the other direction, we have that:
\begin{align*}
\frac  {2\cdot (BO_{\FS, \FB}+SO_{\FS, \FB})}{OPT_{\FS, \FB}} =&
\frac{\HH_{H}-\frac{7}{12}-\frac{H-2}{2H(H-1)}}{\frac{H-1}{H+1}\cdot2\HH_H-1+\frac{2}{H}}\\
\geq&
\frac{\HH_{H}-\frac{7}{12}-\frac{H-2}{2H(H-1)}}{2\HH_H+\frac{2}{H}}\\ 
=&\frac{1}{2}-\frac{\frac{1}{H}+\frac{7}{12}+\frac{H-2}{2H(H-1)}}{2\HH_H+\frac{2}{H}}
\end{align*}
Which also converges to $\frac 1 2$.
\end{proof}

\end{document}